\newtheorem{proposition}{\textbf{Proposition}}
\newtheorem{problem}{\textbf{Problem}}
\newtheorem{theorem}{\textbf{Theorem}}
\newtheorem{definition}{\textbf{Definition}}
\newtheorem{lemma}{\textbf{Lemma}}
\newtheorem{assumption}{\textbf{Assumption}}
\def\BibTeX{{\rm B\kern-.05em{\sc i\kern-.025em b}\kern-.08em
    T\kern-.1667em\lower.7ex\hbox{E}\kern-.125emX}}
\begin{document}
\title{Finite sample guarantees for quantile estimation: An application to detector threshold tuning}
\author{David Umsonst, Justin Ruths, and Henrik Sandberg
\thanks{Submitted: 09 July 2021.
This work was supported in part by the Swedish Research Council (grant 2016-00861), and the Swedish Civil Contingencies Agency (grant MSB 2020-09672).}
\thanks{David Umsonst and Henrik Sandberg are with the Division of Decision and Control Systems in the School of Electrical Engineering and Computer Science at the KTH Royal Institute of Technology, 10044 Stockholm, Sweden (e-mail: \{umsonst,hsan\}@kth.se). }
\thanks{Justin Ruths is with the Department of Mechanical Engineering, The University of Texas at Dallas, 800 W. Campbell Rd, Richardson, TX, USA. (e-mail: jruths@utdallas.edu).}}

\maketitle

\begin{abstract}
In threshold-based anomaly detection, we want to tune the threshold of a detector to achieve an acceptable false alarm rate.
However, tuning the threshold is often a non-trivial task due to unknown detector output distributions.
A detector threshold that provides an acceptable false alarm rate is equivalent to a specific quantile of the detector output distribution.
Therefore, we use quantile estimators based on order statistics to estimate the detector threshold.
The estimation of quantiles from sample data has a more than a century long tradition and we provide three different distribution-free finite sample guarantees for a class of quantile estimators. 
The first is based on the Dworetzky-Kiefer-Wolfowitz inequality, the second utilizes the Vysochanskij-Petunin inequality, and the third is based on exact confidence intervals for a beta distribution.
These guarantees are then compared and used in the detector threshold tuning problem.
We use both simulated data as well as data obtained from an experimental setup with the Temperature Control Lab to validate the guarantees provided.
\end{abstract}

\begin{IEEEkeywords}
Quantile estimation, Finite sample guarantees, Fault detection, Detector threshold tuning
\end{IEEEkeywords}

\section{Introduction}
\label{sec:introduction}
In a highly automated society the automatic detection of anomalies is of utmost importance.
The failure of detecting anomalies can have dire consequences, especially when the anomaly occurs in infrastructures critical to our everyday life such as power grids and water distribution networks.
Two notable incidences of undetected anomalies in critical infrastructures are the Northeast Blackout in 2003 \cite{NortheastBlackout2003}, where the software did not notify the operators about an anomaly, which then led to a cascading failure of the power grid, and the attack on the Ukrainian power grid \cite{UkraineAttack}, where attackers managed to take over a distribution power grid.

A detector needs to not only be able to detect anomalies, but to not trigger on nominal behavior.
Alarms during nominal behavior are called \emph{false alarms}. False alarms increase the cost of detectors and make detectors unreliable.
For example, in the survey \cite{AnaesthesistSurvey}, out of 460 anaesthesists who stated that they deliberately turned off an alarm device, $68.2\,\%$ named too many false alarms as a reason for turning off the alarm device.

Therefore, both the detection rate of anomalies as well as the false alarm rate need to be taken into account when tuning the anomaly detector.
One tool to evaluate the performance of a detector is the Receiver-Operator-Characteristic (ROC) curve \cite{TutorialForROCCurve}.
The ROC curve plots the detection rate over the false alarm rate for different detector tuning and the higher the detection rate is for a smaller false alarm rate the better the detector performs.
With the emerging threat of cyber-attacks on cyber-physical systems in recent years, Urbina \emph{et al.} \cite{Alvaro} argue that the impact of an attacker should also be taken into account when tuning the detector threshold.

Often we do not have exact knowledge about the statistics of the nominal and the anomalous behavior, but we have access to data which can be used to both design and tune the detector. Three different approaches are presented in \cite{AnomalyDetectionSurvey} for this task.
The first approach is \textit{supervised learning}, where data from both nominal and anomalous behavior are used to tune the detector. 
The second approach is \textit{unsupervised learning}, where data is available but the algorithm has to determine, what is anomalous and what is normal behavior.
The third approach provides a middle ground, since it is a \textit{semi-supervised learning} approach, which uses only nominal data to tune the detector.
A method to evaluate the performance of semi-supervised tuning of detectors is proposed in \cite{Goix2016HowTE}.

In this work, we utilize a semi-supervised learning approach and use independent and identically distributed (i.i.d.) samples of the detector output under nominal behavior to estimate a detector threshold that leads to a false alarm rate, which is close to the acceptable false alarm rate with a high probability.
Since the threshold that guarantees a pre-defined acceptable false alarm rate is equivalent to a certain quantile of the detector output distribution, we utilize a sample-based quantile estimator to estimate a detector threshold.
These quantile estimators typically use one or two order statistics of the sample data to determine the quantile and are simple to implement.
Therefore, the quantile estimators are commonly used in statistical software packages \cite{QuantileEstimationWithData}.

It is not clear though how many samples of the detector output are needed to be close to the acceptable false alarm rate, when using a threshold estimate based on sample data.
Therefore, the contribution of this work is two-fold.
First, we provide three different finite guarantees to determine the sample size needed to be close to the acceptable false alarm rate with high probability. 
The first finite guarantee, which we proposed in \cite{UmsonstACC21}, is based on the Dvoretzky-Kiefer-Wolfowitz (DKW) inequality \cite{DKWInequality}, the second finite guarantee is based on the Vysochanskij-Petunin inequality \cite{VysPetInequality}, and the third finite guarantee is based on exact confidence intervals of beta random variables \cite{ExactConfidenceIntBinomial}.
\emph{All three finite guarantees are based on samples from the detector output only and the results are distribution-free and independent of how the anomaly detector determines its output.}
Since we use a quantile estimator to estimate the threshold these distribution-free finite guarantees are also finite guarantees for the estimation of quantiles.
Second, we perform a thorough validation of the finite guarantees with both simulated, and real data obtained from an experimental setup.

In the literature, it is quite common to use nominal data or make assumptions on the nominal behavior when determining the threshold that guarantees an acceptable false alarm rate. For example, under the assumption of a Gaussian distribution for the detector input, Murguia \emph{et al.} \cite{RuthsMultivariate} give a closed-form solution to tune a $\chi^2$ detector and approximations of solutions to a CUSUM detector with resetting for an acceptable false alarm rate.
Since the true nominal detector distribution is often not known, in \cite{DistributionallyRobustTuningLCSS21} a distributionally robust approach is proposed, which makes assumptions on the finiteness of the moments of the input to a $\chi^2$ detector.

More recently attention has turned to sample-based methods that can detect anomalies without requiring a formal model of behavior. Although having more samples is intuitively better, it is important to establish the minimum number of samples necessary for detector tuning so that the detector threshold might be adjusted adaptively over time. 
Sample guarantees also provide characterizations that show how detection confidence can be improved if the detector has access to more than the minimum number of samples required.
Li \emph{et al.} \cite{WassersteinDetector} propose a new detector based on the Wasserstein distance, which uses a sample-based tuning approach to achieve an acceptable false alarm rate. 
The tuning method uses the detector inputs under nominal behavior and assumes a light-tailed distribution for the detector inputs.
Another approach to learn detectors from nominal behavior based on M-estimation is provided in \cite{LearningDetectors}.
Our approach has the advantage that no certain detector structure needs to be assumed and no knowledge about distributions is necessary such as in \cite{RuthsMultivariate,DistributionallyRobustTuningLCSS21,WassersteinDetector}, because the approach is purely based on samples.

\textit{Notation:} Let $\mathbb{R}$ and $\mathbb{Q}$ denote the set of real and rational numbers, respectively. 
We call $\gamma=\frac{n_1}{n_2}$ the irreducible fraction of $\gamma\in \mathbb{Q}$ if and only if $n_1$ and $n_2$ are coprime integers.
Let $x\in\mathbb{R}$, then $|x|$, $\lceil x\rceil$, and $\lfloor x\rfloor$ denote the absolute value of $x$, the smallest integer larger than or equal to $x$, and the largest integer smaller than or equal to $x$, respectively.
Given a set $\lbrace x_i\rbrace_{i=1}^N$, the $i$th order statistic, $x_{(i)}$ is the $i$th largest element in $\lbrace x_i\rbrace_{i=1}^N$, such that the set of order statistics $\lbrace x_{(i)}\rbrace_{i=1}^N$ is $\min_i x_{i}=x_{(1)}\leq x_{(2)}\leq\ldots\leq x_{(N)}=\max_i x_{i}$.
A random variable $X$ that follows a beta distribution with parameters $m$ and $n$ is denoted as $X\sim \mathrm{Beta}(m,n)$.
Given an event $E$, its probability, expected value, variance, and the indicator function of $E$ are given by $\mathrm{prob}\lbrace E\rbrace$, $\mathbb{E}\lbrace E \rbrace$, $\mathrm{Var}\lbrace E\rbrace$, and  $\mathbf{1}_E$, respectively.
\section{Problem formulation}
\label{sec:ProbForm}
In this section, we present the detector tuning problem, how it relates to the quantile of a random variable, and formulate the problem of determining a sample-based threshold, which with high probability guarantees only a small deviation from the acceptable false alarm.
\subsection{Tuning detector thresholds}
The problem of anomaly detection occurs in many different fields. 
For example, in healthcare when devices monitor a patient or in governmental agencies to detect tax fraud.
In this work, we look at anomaly detection in the context of a control system, where the feedback system is equipped with an anomaly detector on the controller side (see Fig.~\ref{fig:BlockDiagAbstractWithDetectorOutput}).
The input to the anomaly detector can depend both on the measurements received as well as the actuator signals determined by the controller, which are not necessarily scalar variables.
For example, if a Kalman filter is used the input to the anomaly detector are the actuator signals and the sensor measurements, which are used to determine the difference, $r$, between the received and predicted measurements. 
This difference can be further processed to determine the output $y_D$, e.g., $y_D=r^\top r$.
\begin{figure}
\centering
\includegraphics[scale=0.4]{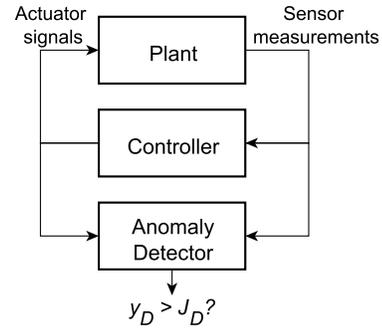}
\caption{A plant is controlled with a controller and the controller side is equipped with an anomaly detector that utilizes both the measurements and the actuator signals to determine its output $y_D$.}
\label{fig:BlockDiagAbstractWithDetectorOutput}
\end{figure}
In the control system example, a small detector output $y_D\in\mathbb{R}$ typically indicates that the system works as predicted, while large outputs indicate an unpredicted behavior.
However, in other applications, the detector output can also be a similarity measure, where a small value for $y_D$ indicates anomalous behavior (see, for example, \cite{LearningDetectors}).
In this work, we use the interpretation that a large output indicates anomalous behavior.
Therefore, an alarm is triggered when $y_D$ exceeds a threshold $J_D\in\mathbb{R}$, i.e., $y_D>J_D$, and no alarm is triggered when $y_D\leq J_D$.
Due to random processes, such as measurement noise, the detector output is also of a stochastic nature.
\begin{assumption}
\label{assum:PDFofDetectorOutput}
In the nominal case, the detector output, $y_D$, is a random variable with a continuous cumulative distribution function $F(y_D)$.
\end{assumption}
For the sake of simplicity, we assume a continuous CDF for the detector output.
In an industrial process control context, the process is often controlled around a desired steady-state value, which can be seen as stationary. Hence, assuming that the detector output is a random variable with a constant distribution is a reasonable choice.
Furthermore, if the plant has linear dynamics and the noise processes are Gaussian, a Kalman filter converges to a stationary distribution for its residual signals, which are often used as the input of the anomaly detector. If the anomaly detector has no internal dynamics, such as a neural network or norm-based detector, then the output of the detector is also a random variable with a fixed distribution.

The stochasticity of $y_D$ leads to alarms in the nominal case, so called \textit{false alarms}, where the rate of false alarms depends on the threshold. 
Since false alarms incur unnecessary costs and too many false alarms make a detector unreliable, we can choose a large threshold to avoid too many false alarms. 
However, a too large threshold leads usually to a smaller detection rate of anomalies.
Hence, there is a trade-off between the detection rate of anomalies and the false alarm rate in the nominal case when tuning the threshold.
Furthermore, Urbina \emph{et al.} \cite{Alvaro} point out that there is also a trade-off between the impact an attacker that wants to keep $y_D$ below the threshold can have and the false alarm rate when choosing the detector threshold.

In addition to that, the nature of anomalies is in most cases unknown. 
Take, for example, a complex large-scale system such as the power grid, where many different types of anomalies, such as sensor and generator failures or attacks, can occur.
Therefore, we often do not know which anomalies will occur and what detection rate we will obtain for a certain threshold $J_D$.
Since it is simpler to focus only on the nominal case instead of trying to consider all possible anomalous behaviors, we will focus on finding a threshold that guarantees an acceptable false alarm rate.
More specifically, we want to find the \emph{smallest} threshold $J_D$ such that
\begin{align}
	\label{eq:ThresholdProblem}
	\mathrm{prob}\lbrace y_D\leq J_D\rbrace\geq \gamma
\end{align}
holds, where $\gamma\in(0,1)$.
The threshold $J_D$ will result in a false alarm probability of at most $1-\gamma$.
We want to find the smallest threshold, because a trivial solution to guaranteeing an acceptable false alarm rate is to choose an arbitrarily large threshold, which in turn will also reduce the detection rate of anomalies.

In special cases, a closed form solution for the threshold exists, see, for example, \cite{RuthsMultivariate}, or the detector threshold can be approximated in a distributionally robust fashion, see \cite{DistributionallyRobustTuningLCSS21}.
If the output of the detector depends on the threshold  as well, such as for the CUSUM and MEWMA detector with resetting, it is more difficult to determine a threshold that guarantees a desired false alarm rate.
Detectors whose output depends on the threshold are one avenue of future work and will not be considered in this work.
\begin{assumption}
	The detector output $y_D$ does not depend on the threshold $J_D$.
\end{assumption}

\subsection{The problem of guaranteeing a false alarm rate}
Before we formulate the problem we consider, we want to define the notion of a $\gamma$-quantile.
\begin{definition}
	\label{def:GammaQuantile}
	The $\gamma$-quantile $J_D$ of a probability distribution is defined as
	\begin{align*}
		J_D=\inf\lbrace y_D: F(y_D)\geq \gamma\rbrace,
	\end{align*}
	where $\gamma\in(0,1)$.
\end{definition}

Note that $J_D$ in Definition~\ref{def:GammaQuantile}  is unique and finite because of the infimum operator and the fact that $\gamma\in(0,1)$.
Since $F(J_D)=\mathrm{prob}\lbrace y_D\leq J_D\rbrace$, we can immediately see that the threshold we are looking for in \eqref{eq:ThresholdProblem} is equivalent to the $\gamma$-quantile of the detector output distribution.
Further, $\gamma$-quantiles can be obtained as optimizers of convex optimization problems (see \cite{CVaRRockafellarGeneral}), which theoretically makes them efficient to calculate.
However, it is often not possible to find an expression for the probability distribution of the detector output. A reason for that is that either the plant dynamics, the controller dynamics, the detector dynamics or all of them are highly nonlinear, such that the distribution of $y_D$ does not have a closed-form solution.
Samples from the monitored process under nominal conditions are usually available, such that we can obtain samples from the detector output under nominal conditions. 
Due to the equivalence with $\gamma$-quantiles,  we use sample-based quantile estimators for a sample-based estimate $\tilde{J}_D$ of the threshold $J_D$.
In this work, we use $N$ independent and identically distributed (i.i.d.) samples of $y_D$, $\lbrace y_{D,i}\rbrace_{i=1}^{N}$, to estimate the detector threshold as
\begin{align}
	\label{eq:ApproximatedThreshold}
	\tilde{J}_D=\beta y_{D,(\lfloor N\gamma\rfloor)}+(1-\beta) y_{D,(\lfloor N\gamma\rfloor+1)},
\end{align}
where $\beta\in(0,1)$.
Note that \eqref{eq:ApproximatedThreshold} has the form of common quantile estimators used in software packages \cite{QuantileEstimationWithData}.
In our previous work, we showed how the quantile estimators can be derived from a sample approximation of the conditional Value-at-Risk \cite{UmsonstACC21}.

Although the true threshold $J_D$ can be approximated via \eqref{eq:ApproximatedThreshold}, an open problem is to determine how many samples we need to get a good approximation.
A good approximation is often characterized by assuming that $|J_D-\tilde{J}_D|$ is small with high probability. Distribution-free bounds on the bias of quantile estimates can be found in \cite{DistributionFreeBiasBounds}, where these bounds are always fulfilled and not only with high probability.
However, depending on the shape of the probability distribution even small changes from the the threshold $J_D$ can lead to large deviations in the false alarm rate. Therefore, we are more interested in how close the false alarm rate $1-F(\tilde{J}_D)$ is to the acceptable false alarm rate $1-F(J_D)=1-\gamma$, since the acceptable false alarm rate is an important variable for the system operator.
\begin{problem}
\label{prob:FiniteGuarantees}
Determine the number, $N$, of i.i.d. detector output samples needed such that
\begin{align*}
	\mathrm{prob}\lbrace |F(\tilde{J}_D)-\gamma|\leq \epsilon\rbrace \geq 1-\rho
\end{align*}
holds, where $\gamma\in(0,1)$, $\epsilon\in(0,1)$, $\rho\in(0,1)$, and $\tilde{J}_D$ is an estimator of the form given in \eqref{eq:ApproximatedThreshold}.
\end{problem}

Enforcing that the false alarm rate is close to the acceptable false alarm rate with a high probability, which means that  $\epsilon$ and $\rho$ are close to zero, will intuitively lead to threshold estimates that are not just trivially large to guarantee the acceptable false alarm rate.

\section{Finite sample guarantees}
\label{sec:FiniteSampleGuarantees}
In this section, we use three different approaches to obtain distribution-free finite sample guarantees that solve Problem~\ref{prob:FiniteGuarantees} and compare the finite guarantees with each other.
\subsection{Finite guarantees from the DKW inequality}
The first sample guarantee is based on the DKW inequality \cite{DKWInequality}, which gives us probabilistic bounds on how close the empirical distribution function $F_{N}(y_D)=\frac{1}{N}\sum_{i=1}^{N}\mathbf{1}_{\lbrace y_{D,i}\leq y_D\rbrace}$ is to the true cumulative distribution function $F(y_D)$,
\begin{align}
	\label{eq:DKWInequality}
	\mathrm{prob}\lbrace|F(y_D)-F_{N}(y_D)|\leq \epsilon \rbrace \geq 1-2e^{-2N\epsilon^2}.
\end{align}
While the DKW inequality gives us bounds for the whole CDF, we will evaluate it only at the point of interest, which is $y_D=\tilde{J}_D$, to obtain our finite guarantees. 
Note that the result was presented in our previous paper \cite{UmsonstACC21}, but is restated for the sake of completeness.
\begin{proposition}[Proposition 3 in \cite{UmsonstACC21}]
	\label{prop:FiniteGuaranteesDKW}
	Assume we have $N$ i.i.d. samples, $\lbrace y_{D,i}\rbrace_{i=1}^{N}$, of the detector output $y_D$ and $\gamma\in\mathbb{Q}$ such that $\gamma=\frac{n_1}{n_2}$ is an irreducible fraction.
	A solution to Problem~\ref{prob:FiniteGuarantees}  is given by $N=\left\lceil \frac{\ln(2\rho^{-1})}{2\epsilon^2n_2}\right\rceil n_2$ if $\beta\in[0,1)$ in \eqref{eq:ApproximatedThreshold}.
\end{proposition}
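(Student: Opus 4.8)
The plan is to apply the DKW inequality \eqref{eq:DKWInequality} at the data-dependent point $y_D=\tilde{J}_D$ and to arrange, through the choice of $N$, that the empirical CDF $F_N$ takes the value $\gamma$ exactly there. First I would record that DKW holds \emph{simultaneously} for every $y_D$: the event $A=\{|F(y_D)-F_N(y_D)|\le\epsilon \text{ for all } y_D\}$ has probability at least $1-2e^{-2N\epsilon^2}$. This uniformity is what lets me evaluate the bound at the random threshold $\tilde{J}_D$, which depends on the very samples that define $F_N$; a pointwise bound at a fixed $y_D$ would not be applicable here, so this is the first thing that must be made precise.

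Next I would fix the sample size. Writing $c=\ln(2\rho^{-1})/(2\epsilon^2)$, the proposed $N=\lceil c/n_2\rceil n_2$ satisfies $N\ge c$, whence $2e^{-2N\epsilon^2}\le 2e^{-\ln(2\rho^{-1})}=\rho$, so $\mathrm{prob}(A)\ge 1-\rho$. Crucially, $N$ is a positive integer multiple of $n_2$, so $N\gamma=N n_1/n_2\in\mathbb{Z}$ and therefore $\lfloor N\gamma\rfloor=N\gamma$. This is the role of the irreducible fraction $\gamma=n_1/n_2$: it supplies exactly the divisibility needed to make $N\gamma$ an integer, which in turn pins $F_N(\tilde{J}_D)$ to $\gamma$ rather than to a nearby multiple of $1/N$.

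With these in hand I would compute $F_N$ around $\tilde{J}_D$ on the probability-one event that the samples are distinct (guaranteed by the continuity of $F$ in Assumption~\ref{assum:PDFofDetectorOutput}). Setting $k=N\gamma$, the estimator is $\tilde{J}_D=\beta y_{D,(k)}+(1-\beta)y_{D,(k+1)}$. For $\beta\in(0,1)$ we have $y_{D,(k)}<\tilde{J}_D<y_{D,(k+1)}$, so exactly $k$ samples are $\le\tilde{J}_D$ and $F_N(\tilde{J}_D)=k/N=\gamma$; intersecting with $A$ gives $|F(\tilde{J}_D)-\gamma|\le\epsilon$ with probability at least $1-\rho$, which is Problem~\ref{prob:FiniteGuarantees}. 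The case $\beta=0$, where $\tilde{J}_D=y_{D,(k+1)}$ and the empirical CDF jumps to $(k+1)/N=\gamma+1/N$, I would handle by a limiting argument: for $y\in(y_{D,(k)},y_{D,(k+1)})$ one has $F_N(y)=\gamma$, so $|F(y)-\gamma|\le\epsilon$ on $A$, and letting $y\uparrow\tilde{J}_D$ while using the continuity of $F$ transfers the bound to $\tilde{J}_D$ itself.

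The main obstacle I anticipate is conceptual rather than computational: correctly justifying the evaluation of DKW at the random point $\tilde{J}_D$ via its uniform form, and recognizing that the divisibility requirement on $N$ is precisely what removes the $\mathcal{O}(1/N)$ discretization gap between $F_N(\tilde{J}_D)$ and $\gamma$. The $\beta=0$ boundary case is the one place where Assumption~\ref{assum:PDFofDetectorOutput} is used a second time, through continuity of $F$, to pass from the left limit to the value at $\tilde{J}_D$; everything else is bookkeeping with order statistics and the elementary estimate $N\ge c$.
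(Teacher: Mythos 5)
Your proof is correct and follows the same route as the paper: pick $N$ a multiple of $n_2$ so that $N\gamma$ is an integer, use the uniform form of the DKW inequality to evaluate the bound at the sample-dependent point $\tilde{J}_D$ where $F_N(\tilde{J}_D)=\gamma$, and solve $2e^{-2N\epsilon^2}\le\rho$ for $N$. You are in fact more careful than the paper at the boundary case $\beta=0$, where $F_N(\tilde{J}_D)=F_N(y_{D,(N\gamma+1)})=\gamma+1/N$ rather than $\gamma$: the paper simply asserts $F_N(\tilde{J}_D)=\gamma$ for all $\beta\in[0,1)$, whereas your left-limit argument using the continuity of $F$ from Assumption~\ref{assum:PDFofDetectorOutput} is the right way to close that small gap.
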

\begin{proof}
	First, we choose $N=kn_2$, where $k\in\mathbb{N}$, such that $N\gamma=kn_1$ is an integer as well.
	Further, with $\beta\in[0,1)$ we obtain that $F_{N}(\tilde{J}_D)=\gamma$. Evaluating the DKW inequality \eqref{eq:DKWInequality} at $y_D=\tilde{J}_D$ leads then to 
	\begin{align*}
		\mathrm{prob}\lbrace|F(\tilde{J}_D)-\gamma|\leq \epsilon \rbrace \geq 1-2e^{-2kn_2\epsilon^2}.
	\end{align*}
	Finally, we set $\rho=2e^{-2kn_2\epsilon^2}$ and solving for $k$ leads to $k=\left\lceil \frac{\ln(2\rho^{-1})}{2\epsilon^2n_2}\right\rceil$, which concludes the proof.
\end{proof}

Since the DKW inequality takes the whole CDF into account, this finite guarantee can be conservative.
Here, we are only interested in the point $y_D=\tilde{J}_D$ and not the complete probability distribution.
Therefore, we propose two more finite guarantees that evaluate the CDF at $y_D=\tilde{J}_D$ in the following.

\subsection{Finite guarantees from Vysochanskij-Petunin inequality}
Instead of focusing on the whole probability distribution as in the previous section, we now use the statistics of order statistics to determine a finite sample guarantee which utilizes the Vysochanskij-Petunin inequality \cite{VysPetInequality}.
The Vysochanskij-Petunin inequality is given by
\begin{align}
	\label{eq:VysInequality}
	\mathrm{prob}\lbrace |X-\mathbb{E}\lbrace X\rbrace|\geq \epsilon\rbrace \leq \frac{4\mathrm{Var}\lbrace X\rbrace}{9\epsilon^2}
\end{align}
if $3\epsilon^2\geq 8\mathrm{Var}\lbrace X\rbrace$, where $X$ is a unimodal random variable with a finite mean and variance.
Since for the Vysochanskij-Petunin inequality the expected value and variance of a random variable are needed, we introduce the expected value and variance of the CDF of a random variable evaluated at the $m$th order statistic.

\begin{lemma}
	\label{lem:MeanAndVarianceOfOrderStatistic}

Let $y_{D,(m)}$ be the $m$th order statistics of $N$ i.i.d. samples with CDF $F(\cdot)$. Then $F(y_{D,(m)})$ has a unimodal beta distribution with parameters $m$ and $N+1-m$, i.e., $F(y_{D,(m)})\sim\mathrm{Beta}(m,N+1-m)$, and the expected value and the variance of $F(y_{D,(m)})$ are given by
	\begin{align}
		\label{eq:MeanOrderStatistics}
		\mathbb{E}\lbrace F(y_{D,(m)})\rbrace = \frac{m}{N+1} 
	\end{align}
	and
	\begin{align}
		\label{eq:VarianceOrderStatistics}
		\mathrm{Var}\lbrace  F(y_{D,(m)}) \rbrace=\frac{m(N+1-m)}{(N+1)^2(N+2)},
	\end{align}
	respectively.

\end{lemma}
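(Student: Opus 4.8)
The plan is to reduce everything to the well-understood case of order statistics of uniform random variables by means of the probability integral transform. Since $F$ is continuous by Assumption~\ref{assum:PDFofDetectorOutput}, the transformed samples $U_i := F(y_{D,i})$ are i.i.d.\ and uniformly distributed on $[0,1]$; this is the standard probability integral transform and holds for any continuous CDF, regardless of whether $F$ is strictly increasing (one checks that $\mathrm{prob}\lbrace F(y_{D,i})\leq u\rbrace = u$ for $u\in(0,1)$ using continuity and the intermediate value theorem, even when $F$ is flat at level $u$). First I would establish this, and then exploit that $F$ is non-decreasing: applying a monotone map to an ordered list preserves the order, so $F(y_{D,(m)})$ is exactly the $m$th order statistic $U_{(m)}$ of the uniform samples $\lbrace U_i\rbrace_{i=1}^N$, with ties occurring with probability zero by continuity. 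Thus the distributional claim reduces to identifying the law of $U_{(m)}$.

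Next I would derive the density of $U_{(m)}$ directly. The event $U_{(m)}\in[u,u+\mathrm{d}u]$ requires exactly $m-1$ of the samples to lie below $u$, one to lie in the infinitesimal interval, and the remaining $N-m$ to lie above; the multinomial bookkeeping gives the density
\begin{align*}
	f_{U_{(m)}}(u) = \frac{N!}{(m-1)!\,(N-m)!}\, u^{m-1}(1-u)^{N-m}, \quad u\in[0,1].
\end{align*}
Recognizing $\frac{N!}{(m-1)!(N-m)!}=\frac{\Gamma(N+1)}{\Gamma(m)\Gamma(N+1-m)}$ as the reciprocal of the Beta function $B(m,N+1-m)$, this is precisely the density of $\mathrm{Beta}(m,N+1-m)$, which establishes $F(y_{D,(m)})\sim\mathrm{Beta}(m,N+1-m)$. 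Alternatively one could compute the CDF via $\mathrm{prob}\lbrace U_{(m)}\leq u\rbrace=\sum_{k=m}^{N}\binom{N}{k}u^k(1-u)^{N-k}$ and differentiate.

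The moment formulas then follow from the standard moments of the Beta distribution: $\mathrm{Beta}(\alpha,\beta)$ has mean $\frac{\alpha}{\alpha+\beta}$ and variance $\frac{\alpha\beta}{(\alpha+\beta)^2(\alpha+\beta+1)}$. Substituting $\alpha=m$ and $\beta=N+1-m$, so that $\alpha+\beta=N+1$, yields \eqref{eq:MeanOrderStatistics} and \eqref{eq:VarianceOrderStatistics} directly. Finally, for unimodality---which is the hypothesis needed to invoke the Vysochanskij-Petunin inequality \eqref{eq:VysInequality}---I would note that $\mathrm{Beta}(\alpha,\beta)$ is unimodal whenever $\alpha\geq 1$ and $\beta\geq 1$; since $1\leq m\leq N$, both parameters $m$ and $N+1-m$ are at least $1$, so the distribution is unimodal for every admissible $m$.

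None of the steps presents a serious obstacle, since the result is classical. The only point requiring care is the order-preserving reduction: one must verify that the monotonicity of $F$ correctly carries the $m$th order statistic of the $y_{D,i}$ to the $m$th order statistic of the $U_i$ even when $F$ is merely non-decreasing rather than strictly increasing. This is handled by the observation that the continuity of $F$ makes ties among the $U_i$ a probability-zero event, so the $m$th uniform order statistic is well defined almost surely and coincides with $F(y_{D,(m)})$.
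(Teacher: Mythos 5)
Your proposal is correct and follows essentially the same route as the paper: the paper simply cites the classical fact that $F(y_{D,(m)})\sim\mathrm{Beta}(m,N+1-m)$ from a textbook on order statistics and then reads off the mean, variance, and unimodality from standard beta-distribution properties, whereas you derive that classical fact explicitly via the probability integral transform and the density of the uniform order statistic. Your careful handling of the non-strictly-increasing case is a nice touch but does not change the substance of the argument.
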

\begin{proof}
	From Chapter~2 in \cite{OrderStatistics}, we know that $F(y_{D,(m)})\sim\mathrm{Beta}(m,N+1-m)$. Hence, \eqref{eq:MeanOrderStatistics} and \eqref{eq:VarianceOrderStatistics} are the expected value and variance of the beta distribution with parameters $m$ and $N+1-m$, respectively. 
	Further, since $m\in\lbrace1,\ldots,N\rbrace$, both parameters of the beta distribution are larger than or equal to one, which indicates that the beta distribution is unimodal (see Chapter~2 in \cite{HandbookOfBetaDistribution}).
\end{proof}
Interestingly, neither the expected value nor the variance of the CDF at the $m$th order statistic depend on the distribution itself.
This is used in the following to determine a distribution-free finite sample guarantee.
\begin{theorem}
\label{thm:FiniteGuaranteesVysv}
Assume we have $N$ i.i.d. samples $\lbrace y_{D,i}\rbrace_{i=1}^{N}$ of the detector output $y_D$ and let $\gamma\in\mathbb{Q}$ such that $\gamma=\frac{n_1}{n_2}$ is its irreducible fraction. 
A solution to Problem~\ref{prob:FiniteGuarantees} is given by $N=kn_2-1$, where
\begin{align}
	\label{eq:ConditionOnkInVys}
	k=\left\lceil\frac{1}{n_2}\left(\frac{4\gamma(1-\gamma)}{9\rho\epsilon^2}-1\right)\right\rceil,
\end{align} 
if $4\gamma(1-\gamma)>9\rho\epsilon^2$, $6\rho\leq 1$, and $\tilde{J}_{D}=y_{D,(\lfloor N\gamma\rfloor+1)}$.
\end{theorem}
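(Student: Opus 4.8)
The plan is to exploit the fact that the estimator $\tilde{J}_{D}=y_{D,(\lfloor N\gamma\rfloor+1)}$ is exactly the case $\beta=0$ in \eqref{eq:ApproximatedThreshold}, so that $F(\tilde{J}_D)=F(y_{D,(m)})$ with $m=\lfloor N\gamma\rfloor+1$, and then to apply the Vysochanskij-Petunin inequality \eqref{eq:VysInequality} to the random variable $X=F(\tilde{J}_D)$. By Lemma~\ref{lem:MeanAndVarianceOfOrderStatistic}, $X\sim\mathrm{Beta}(m,N+1-m)$ is unimodal with a mean and variance that do not depend on $F$, so the right-hand side of \eqref{eq:VysInequality} is distribution-free. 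The subtlety is that \eqref{eq:VysInequality} controls the deviation of $X$ from its mean $\mathbb{E}\lbrace X\rbrace=m/(N+1)$, whereas Problem~\ref{prob:FiniteGuarantees} asks for the deviation of $X$ from $\gamma$. The key device is therefore to choose $N$ so that $\mathbb{E}\lbrace X\rbrace=\gamma$ \emph{exactly}, which is precisely why $N=kn_2-1$ is taken rather than a multiple of $n_2$ as in the DKW case.

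First I would verify this alignment of the mean. Writing $N=kn_2-1$ gives $N\gamma=kn_1-n_1/n_2$, and since $0<n_1/n_2<1$ for the irreducible fraction $\gamma=n_1/n_2\in(0,1)$, we obtain $\lfloor N\gamma\rfloor=kn_1-1$ and hence $m=kn_1$. Then $\mathbb{E}\lbrace X\rbrace=m/(N+1)=kn_1/(kn_2)=\gamma$, so \eqref{eq:VysInequality} becomes a bound on $\mathrm{prob}\lbrace|F(\tilde{J}_D)-\gamma|\geq\epsilon\rbrace$ directly. Substituting $m=kn_1$ and $N+1=kn_2$ into \eqref{eq:VarianceOrderStatistics} and cancelling the common factor $k^2$ yields the compact form $\mathrm{Var}\lbrace X\rbrace=\gamma(1-\gamma)/(kn_2+1)$. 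Inserting this into the right-hand side of \eqref{eq:VysInequality} and demanding $\tfrac{4\mathrm{Var}\lbrace X\rbrace}{9\epsilon^2}\leq\rho$ reduces to $kn_2+1\geq\tfrac{4\gamma(1-\gamma)}{9\rho\epsilon^2}$; solving for $k$ and taking the ceiling gives exactly \eqref{eq:ConditionOnkInVys}.

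The last step, and the one I expect to be the main obstacle, is checking the two side conditions that legitimize the argument. The unimodality of $X$ required by Lemma~\ref{lem:MeanAndVarianceOfOrderStatistic} needs both beta parameters to be at least one, i.e. $m=kn_1\geq1$ and $N+1-m=k(n_2-n_1)\geq1$; both hold as soon as $k\geq1$, which the ceiling guarantees precisely when $\tfrac{4\gamma(1-\gamma)}{9\rho\epsilon^2}-1>0$, that is, under the hypothesis $4\gamma(1-\gamma)>9\rho\epsilon^2$. More delicate is the applicability condition $3\epsilon^2\geq8\mathrm{Var}\lbrace X\rbrace$ attached to \eqref{eq:VysInequality}. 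From the chosen $k$ one has $kn_2+1\geq\tfrac{4\gamma(1-\gamma)}{9\rho\epsilon^2}$, hence $\mathrm{Var}\lbrace X\rbrace\leq\tfrac{9\rho\epsilon^2}{4}$ and therefore $8\mathrm{Var}\lbrace X\rbrace\leq18\rho\epsilon^2$; this is at most $3\epsilon^2$ exactly when $6\rho\leq1$, which is the remaining hypothesis. Assembling these pieces shows that under $4\gamma(1-\gamma)>9\rho\epsilon^2$ and $6\rho\leq1$ the Vysochanskij-Petunin inequality is valid and yields $\mathrm{prob}\lbrace|F(\tilde{J}_D)-\gamma|\leq\epsilon\rbrace\geq1-\rho$, which solves Problem~\ref{prob:FiniteGuarantees} and completes the argument.
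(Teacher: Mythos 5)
Your proposal is correct and follows essentially the same route as the paper's own proof: apply the Vysochanskij--Petunin inequality to $F(y_{D,(m)})$ using Lemma~\ref{lem:MeanAndVarianceOfOrderStatistic}, choose $N=kn_2-1$ and $m=kn_1$ so the beta mean equals $\gamma$ exactly, solve $\frac{4\gamma(1-\gamma)}{9\epsilon^2(kn_2+1)}\leq\rho$ for $k$, and verify the applicability condition $3\epsilon^2\geq 8\mathrm{Var}\lbrace X\rbrace$ via the same variance bound $\frac{9}{4}\rho\epsilon^2$ and the hypothesis $6\rho\leq 1$. The only differences are cosmetic: you spell out why $\lfloor N\gamma\rfloor=kn_1-1$ and explicitly check unimodality, details the paper leaves implicit.
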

\begin{proof}
Let $\tilde{J}_D=y_{D,(m)}$. Then the false alarm rate of this threshold is given by $1-F(y_{D,(m)})$, which is a random variable that depends on the samples obtained. 
Hence, we use the Vysochanskij-Petunin  inequality \eqref{eq:VysInequality} and Lemma~\ref{lem:MeanAndVarianceOfOrderStatistic} to obtain
\begin{align*}
\mathrm{prob}\left\lbrace \left|F(y_{D,(m)})-\frac{m}{N+1}\right|\geq \epsilon\right\rbrace \leq \frac{4m(N+1-m)}{9\epsilon^2(N+1)^2(N+2)}.
\end{align*}
Next, with $k\in\mathbb{N}$ we set $N=kn_2-1$ and $m=\lfloor N\gamma\rfloor+1=kn_1$ such that $\frac{m}{N+1}=\gamma$, which leads to
\begin{align*}
\mathrm{prob}\left\lbrace \left|F(y_{D,(\lfloor N \gamma\rfloor+1)})-\gamma\right|\geq \epsilon\right\rbrace \leq \frac{4\gamma(1-\gamma)}{9\epsilon^2(kn_2+1)}.
\end{align*}
By introducing $\rho=\frac{4\gamma(1-\gamma)}{9\epsilon^2(kn_2+1)}$ and solving for $k$ we obtain \eqref{eq:ConditionOnkInVys} by making sure that $k$ is an integer. Next, to guarantee that $k\geq 1$, we need to introduce the condition $4\gamma(1-\gamma)>9\rho\epsilon^2$.
Finally, for the Vysochanskij-Petunin inequality \eqref{eq:VysInequality} to hold we need $3\epsilon^2\geq 8\mathrm{Var}\lbrace y_{D,(\lfloor N\gamma\rfloor+1)}\rbrace$ for the determined sample size $N$. 
Let us look at an upper bound for the variance first,
\begin{align*}
	\mathrm{Var}\lbrace y_{D,(\lfloor N\gamma\rfloor+1)}\rbrace&=\frac{\gamma(1-\gamma)}{\left\lceil\frac{1}{n_2}\left(\frac{4\gamma(1-\gamma)}{9\rho\epsilon^2}-1\right)\right\rceil n_2+1}\\
	&\leq \frac{\gamma(1-\gamma)}{\frac{1}{n_2}\left(\frac{4\gamma(1-\gamma)}{9\rho\epsilon^2}-1\right) n_2+1}=\frac{9}{4}\epsilon^2\rho.
\end{align*}
From this upper bound we obtain that if $6\rho\leq 1$ then $3\epsilon^2\geq 8\mathrm{Var}\lbrace y_{D,(\lfloor N\gamma\rfloor+1)}\rbrace$ holds.
\end{proof}
We would like to point out that the condition $\rho\leq \frac{1}{6}$ is not too restrictive, since $\rho$ is typically chosen to be small to achieve the guarantees with high probability.
Instead of using the Vysochanskij-Petunin inequality one could also use Chebyshev's inequality \cite{ChebyshevReference}, which leads to more relaxed conditions on $\rho$ and $\epsilon$ at the expense of having more conservative sample guarantees.

\subsection{Finite guarantees from confidence intervals of the beta distribution}
In the previous section, we determined finite guarantees based on the Vysochanskij-Petunin inequality, which uses the expected value and variance of $F(y_{D,(m)})$.
In this section, we will use that $F(y_{D,(m)})$ has a beta distribution and determine sample guarantees based on confidence intervals of the beta distribution.
We begin by introducing a result on the confidence interval of a beta distributed random variable.
\begin{lemma}
\label{lem:ExactConfIntBeta}
Let $X$ be distributed according to a beta distribution with parameters $m$ and $N+1-m$, i.e., $X\sim\mathrm{Beta}(m,N+1-m)$, and let $\hat{\gamma}=\frac{m}{N}\in(0,1)$. Then
	\begin{align*}
		\mathrm{prob}\lbrace \hat{\gamma}-\epsilon_l\leq X \leq \hat{\gamma}+\epsilon_u\rbrace\geq 1-\rho,
	\end{align*}
	where $\rho\in(0,1)$,
	\begin{align*}
	\epsilon_u&=\frac{\sqrt{\hat{\gamma}-\hat{\gamma}^2}}{\sqrt{N}}z_{\frac{\rho}{2}}+\frac{2(0.5-\hat{\gamma})z_{\frac{\rho}{2}}^2-1-\hat{\gamma}}{3N},\\
	\epsilon_l&=\frac{\sqrt{\hat{\gamma}-\hat{\gamma}^2}}{\sqrt{N}}z_{\frac{\rho}{2}}-\frac{2(0.5-\hat{\gamma})z_{\frac{\rho}{2}}^2-1-\hat{\gamma}}{3N},
	\end{align*}
	 up to order $\mathcal{O}(N^{-\frac{3}{2}})$, and $z_{\frac{\rho}{2}}$ is the upper $\frac{\rho}{2}$-quantile of the standard Gaussian distribution.

\end{lemma}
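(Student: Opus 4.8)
The plan is to recognize that the stated interval is simply the central probability interval of $X\sim\mathrm{Beta}(m,N+1-m)$ with coverage $1-\rho$: if $x_{\rho/2}$ and $x_{1-\rho/2}$ denote the value with lower-tail probability $\frac{\rho}{2}$ and the value with lower-tail probability $1-\frac{\rho}{2}$, respectively, then $\mathrm{prob}\{x_{\rho/2}\leq X\leq x_{1-\rho/2}\}=1-\rho$ exactly. It therefore suffices to show that, up to order $\mathcal{O}(N^{-3/2})$, one has $x_{1-\rho/2}=\hat{\gamma}+\epsilon_u$ and $x_{\rho/2}=\hat{\gamma}-\epsilon_l$. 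First I would approximate the two beta quantiles through a Cornish--Fisher expansion, which expresses the $p$-quantile of $X$ in terms of the corresponding standard-Gaussian quantile $z_p$ and the first cumulants of $X$, namely $x_p=\mu+\sigma\left[z_p+\frac{\gamma_1}{6}(z_p^2-1)\right]+\mathcal{O}(N^{-3/2})$, where $\mu=\mathbb{E}\{X\}$, $\sigma^2=\mathrm{Var}\{X\}$, and $\gamma_1$ is the skewness of $X$.

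The mean and variance are supplied directly by Lemma~\ref{lem:MeanAndVarianceOfOrderStatistic}, $\mu=\frac{m}{N+1}$ and $\sigma^2=\frac{m(N+1-m)}{(N+1)^2(N+2)}$, while the skewness of $\mathrm{Beta}(m,N+1-m)$ is the standard expression $\gamma_1=\frac{2(N+1-2m)\sqrt{N+2}}{(N+3)\sqrt{m(N+1-m)}}$. Next I would substitute $m=N\hat{\gamma}$ and expand every quantity in powers of $N^{-1/2}$. The leading contribution is $\sigma z_{\frac{\rho}{2}}=\frac{\sqrt{\hat{\gamma}-\hat{\gamma}^2}}{\sqrt{N}}z_{\frac{\rho}{2}}+\mathcal{O}(N^{-3/2})$, of order $N^{-1/2}$, which matches the leading term of both $\epsilon_u$ and $\epsilon_l$. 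The order-$N^{-1}$ contribution has two sources: the bias $\mu-\hat{\gamma}=-\frac{\hat{\gamma}}{N+1}$, coming from the difference between the beta mean $\frac{m}{N+1}$ and the centering point $\hat{\gamma}=\frac{m}{N}$, and the skewness correction $\frac{\sigma\gamma_1}{6}(z_{\frac{\rho}{2}}^2-1)=\frac{(1-2\hat{\gamma})}{3N}(z_{\frac{\rho}{2}}^2-1)+\mathcal{O}(N^{-3/2})$. Collecting these two over the common denominator $3N$ yields exactly $\frac{2(0.5-\hat{\gamma})z_{\frac{\rho}{2}}^2-1-\hat{\gamma}}{3N}$, which is the $\mathcal{O}(N^{-1})$ term common to $\epsilon_u$ and $\epsilon_l$.

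To separate the two endpoints I would note that the Gaussian quantile entering the Cornish--Fisher expansion is $+z_{\frac{\rho}{2}}$ for the upper endpoint $x_{1-\rho/2}$ and $-z_{\frac{\rho}{2}}$ for the lower endpoint $x_{\rho/2}$, by symmetry of the standard Gaussian. The leading $\sigma z_p$ term flips sign between the two endpoints, whereas the skewness correction depends only on $z_p^2$ and is therefore unchanged; this is precisely why the $N^{-1}$ correction enters with a $+$ sign in $\epsilon_u$ and a $-$ sign in $\epsilon_l$. Assembling the pieces gives $x_{1-\rho/2}-\hat{\gamma}=\epsilon_u$ and $\hat{\gamma}-x_{\rho/2}=\epsilon_l$ up to $\mathcal{O}(N^{-3/2})$, which is the claim.

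The main obstacle I expect is the bookkeeping of the asymptotic expansion rather than any conceptual difficulty: one must verify that the bias term $\mu-\hat{\gamma}$ and the skewness term combine into exactly the coefficient $-1-\hat{\gamma}$ (and not some nearby value), and that every neglected contribution is genuinely $\mathcal{O}(N^{-3/2})$. The next Cornish--Fisher term involves the excess kurtosis $\gamma_2=\mathcal{O}(N^{-1})$ and the term $\gamma_1^2=\mathcal{O}(N^{-1})$, each multiplied by $\sigma=\mathcal{O}(N^{-1/2})$, so both land in the $\mathcal{O}(N^{-3/2})$ remainder; the subleading corrections to $\sigma$ and $\gamma_1$ must likewise be tracked carefully, because the expansions of $\sqrt{m(N+1-m)}$ and of $\frac{m}{N+1}$ each generate $N^{-1}$ pieces that could, if mishandled, contaminate the order-$N^{-1}$ coefficient.
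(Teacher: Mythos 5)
Your proposal is correct, and the coefficient bookkeeping checks out: with $\mu-\hat{\gamma}=-\hat{\gamma}/(N+1)=-\hat{\gamma}/N+\mathcal{O}(N^{-2})$ and $\tfrac{\sigma\gamma_1}{6}(z^2-1)=\tfrac{(1-2\hat{\gamma})(z^2-1)}{3N}+\mathcal{O}(N^{-3/2})$, the sum is $\tfrac{(1-2\hat{\gamma})z^2-1-\hat{\gamma}}{3N}$, exactly the common $\mathcal{O}(N^{-1})$ term of $\epsilon_u$ and $\epsilon_l$, and the sign flip of the leading $\sigma z_p$ term between the two endpoints reproduces the asymmetry of the interval. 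However, your route is genuinely different from the paper's: the paper does not derive the quantile approximation at all, but simply invokes Lemma~1 and Theorem~1 of the cited reference on exact/approximate confidence intervals for binomial and beta random variables, which already supply the closed-form expression for $\gamma_q$ with $\mathrm{prob}\{X\leq\gamma_q\}=\rho_q$ up to $\mathcal{O}(N^{-3/2})$; the only work done in the paper's proof is evaluating this at $\rho_q=\rho/2$ and $\rho_q=1-\rho/2$ and using $z_{1-\rho/2}=-z_{\rho/2}$. Your Cornish--Fisher derivation effectively reconstructs the content of that cited result from the beta cumulants, which makes the lemma self-contained and makes transparent where each piece of the correction term (mean bias versus skewness) comes from; the cost is that you must additionally justify that the Cornish--Fisher expansion is a valid asymptotic expansion for the $\mathrm{Beta}(m,N+1-m)$ quantiles with $m/N$ fixed (i.e., that the underlying Edgeworth expansion holds and the neglected terms are uniformly $\mathcal{O}(N^{-3/2})$), which is precisely the rigor that the paper outsources to the reference. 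Neither approach is wrong; yours trades a citation for a (standard but nontrivial) expansion argument.
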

\begin{proof}
	From Lemma~1 and Theorem~1 in \cite{ExactConfidenceIntBinomial}, we obtain that $\mathrm{prob}\lbrace X\leq \gamma_q \rbrace=\rho_q$ holds for
	\begin{align*}
		\gamma_q=\hat{\gamma}-\frac{\sqrt{\hat{\gamma}-\hat{\gamma}^2}}{\sqrt{N}}z_{\rho_q}+\frac{2(0.5-\hat{\gamma})z_{\rho_q}^2-1-\hat{\gamma}}{3N},
	\end{align*} 
	up to order $\mathcal{O}(N^{-\frac{3}{2}})$.
	Using $\rho_q=\frac{\rho}{2}$ and $\rho_q=1-\frac{\rho}{2}$ to obtain $\gamma_L$ and $\gamma_U$, respectively, we determine that
	\begin{align*}
		\mathrm{prob}\lbrace \gamma_L \leq X \leq \gamma_U \rbrace=1-\rho.
	\end{align*}
	Finally, since $z_{1-\frac{\rho}{2}}=-z_{\frac{\rho}{2}}$, we obtain $\epsilon_l$ and $\epsilon_u$ by arithmetic operations, which concludes the proof.
\end{proof}
Note that the bounds $\epsilon_l$ and $\epsilon_u$ in Lemma~\ref{lem:ExactConfIntBeta} are functions of $\hat{\gamma}$, $\rho$, and $N$. 
For the sake of readability, we omit the parameters of these bounds.
Further, note that Lemma~\ref{lem:ExactConfIntBeta} provides a potentially asymmetric confidence interval, i.e., $\epsilon_l\neq\epsilon_u$, depending on the value of $\hat{\gamma}$ and $\rho$.
\begin{theorem}
	\label{thm:SampleGuaranteesExact}
	Assume $\gamma\in[0.5,1)$ is a rational number, such that $\gamma=\frac{n_1}{n_2}$ is the irreducible fraction of $\gamma$, and that we have $N$ i.i.d. samples of $y_D$, $\lbrace y_{D,i}\rbrace_{i=1}^{N}$.
	Under these assumptions, a solution to Problem~\ref{prob:FiniteGuarantees} is given by $N=kn_2$, and
	\begin{multline}
		\label{eq:ExactKForConfInterval}
		k=\left\lceil\left(\frac{z_{\frac{\rho}{2}}\sqrt{\gamma-\gamma^2}}{2\epsilon\sqrt{n_2}}\right.\right.\\
	\left.\left.+\sqrt{\left(\frac{z_{\frac{\rho}{2}}\sqrt{\gamma-\gamma^2}}{2\epsilon\sqrt{n_2}}\right)^2+\frac{2(\gamma-0.5)z_{\frac{\rho}{2}}^2}{3n_2\epsilon}+\frac{1+\gamma}{3n_2\epsilon}}\right)^2\right\rceil,
	\end{multline}
	if $\tilde{J}_{D}=y_{D,(N\gamma)}$, where  $z_{\frac{\rho}{2}}$ is the upper $\frac{\rho}{2}$-quantile of the standard Gaussian distribution.
\end{theorem}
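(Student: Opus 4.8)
The plan is to take $\tilde{J}_D = y_{D,(m)}$ with $m = N\gamma$ and reduce Problem~\ref{prob:FiniteGuarantees} to controlling the width of the confidence interval furnished by Lemma~\ref{lem:ExactConfIntBeta}. First I would choose $N = kn_2$ with $k\in\mathbb{N}$, so that $m = N\gamma = kn_1$ is an integer lying in $\{1,\dots,N\}$; this makes $y_{D,(m)}$ a well-defined order statistic and gives $\hat{\gamma} = m/N = \gamma \in [0.5,1)\subset(0,1)$, which is exactly the hypothesis of Lemma~\ref{lem:ExactConfIntBeta}. By Lemma~\ref{lem:MeanAndVarianceOfOrderStatistic}, $F(\tilde{J}_D) = F(y_{D,(m)}) \sim \mathrm{Beta}(m, N+1-m)$, so applying Lemma~\ref{lem:ExactConfIntBeta} with $X = F(\tilde{J}_D)$ yields $\mathrm{prob}\{\gamma - \epsilon_l \leq F(\tilde{J}_D) \leq \gamma + \epsilon_u\} \geq 1-\rho$ with the explicit $\epsilon_l,\epsilon_u$ from that lemma.

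Since Problem~\ref{prob:FiniteGuarantees} asks for $|F(\tilde{J}_D)-\gamma|\leq\epsilon$, it suffices to enforce $\max\{\epsilon_l,\epsilon_u\}\leq\epsilon$. Here the assumption $\gamma\geq 0.5$ is decisive: the second-order numerator $2(0.5-\gamma)z_{\frac{\rho}{2}}^2 - 1 - \gamma$ is strictly negative because $0.5-\gamma\leq 0$ and $-1-\gamma<0$. Hence $\epsilon_u$ subtracts a positive quantity from the first-order term while $\epsilon_l$ adds it, so $\epsilon_l>\epsilon_u$ and it is enough to require $\epsilon_l\leq\epsilon$. Rewriting the negated numerator as $2(\gamma-0.5)z_{\frac{\rho}{2}}^2 + 1 + \gamma>0$, this condition reads
\begin{align*}
\frac{z_{\frac{\rho}{2}}\sqrt{\gamma-\gamma^2}}{\sqrt{N}} + \frac{2(\gamma-0.5)z_{\frac{\rho}{2}}^2 + 1 + \gamma}{3N} \leq \epsilon.
\end{align*}

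Finally I would substitute $N = kn_2$ and treat the inequality as a quadratic in $w = \sqrt{k}$: multiplying through by $w^2$ gives $\epsilon w^2 - Bw - C \geq 0$ with $B = z_{\frac{\rho}{2}}\sqrt{\gamma-\gamma^2}/\sqrt{n_2}$ and $C = (2(\gamma-0.5)z_{\frac{\rho}{2}}^2 + 1 + \gamma)/(3n_2)$. This holds for every $w$ at least as large as the positive root $w^\ast = (B + \sqrt{B^2 + 4\epsilon C})/(2\epsilon)$. Writing $w^\ast = B/(2\epsilon) + \sqrt{B^2/(4\epsilon^2) + C/\epsilon}$ and expanding $B/(2\epsilon)$, $B^2/(4\epsilon^2)$, and $C/\epsilon$ reproduces precisely the leading term and the three terms under the square root in \eqref{eq:ExactKForConfInterval}. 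Taking $k = \lceil (w^\ast)^2\rceil$ then forces $\sqrt{k}\geq w^\ast$, hence $\epsilon_l\leq\epsilon$, which closes the argument.

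I expect the main obstacle to be the sign bookkeeping that justifies $\epsilon_l>\epsilon_u$: the entire downstream reduction to a single quadratic in $\sqrt{k}$ hinges on the second-order term being negative, which is exactly what the restriction $\gamma\geq 0.5$ secures. For $\gamma<0.5$ the sign of $2(0.5-\gamma)z_{\frac{\rho}{2}}^2 - 1 - \gamma$ is no longer fixed, so the larger of $\epsilon_l,\epsilon_u$ can switch and one clean formula for $k$ would not suffice. A secondary caveat I would flag explicitly is that Lemma~\ref{lem:ExactConfIntBeta} holds only up to $\mathcal{O}(N^{-3/2})$, so the resulting guarantee inherits this asymptotic approximation rather than being an exact finite-sample bound.
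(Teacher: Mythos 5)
Your proposal is correct and follows essentially the same route as the paper: apply Lemma~\ref{lem:MeanAndVarianceOfOrderStatistic} and Lemma~\ref{lem:ExactConfIntBeta} with $m=N\gamma=kn_1$, use $\gamma\geq 0.5$ to conclude $\epsilon_l\geq\epsilon_u$ so that only $\epsilon_l\leq\epsilon$ must be enforced, and solve the resulting quadratic in $\sqrt{k}$ keeping the positive root. Your explicit sign argument for $\epsilon_l\geq\epsilon_u$ and your framing as a quadratic inequality (which justifies the ceiling) are slightly more detailed than the paper's, and your closing caveat about the $\mathcal{O}(N^{-3/2})$ approximation matches a remark the paper makes in its discussion section.
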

\begin{proof}
First, recall that $F(y_{D,(m)})\sim\mathrm{Beta}(m,N+1-m)$ from Lemma~\ref{lem:MeanAndVarianceOfOrderStatistic} and note that by setting $m=N\gamma$ we have $\hat{\gamma}=\gamma$, since $N=kn_2$.
Next, by assuming that $\gamma\in[0.5,1)$, we determine that $\epsilon_l\geq\epsilon_u$.
This shows us that $[\gamma-\epsilon_l,\gamma+\epsilon_u]\subseteq[\gamma-\epsilon_l,\gamma+\epsilon_l]$ for $\gamma\in[0.5,1)$.
Therefore, with $\tilde{J}_D=y_{D,(N\gamma)}$, the true probability of triggering no alarm is given by $F(\tilde{J}_D)$ such that we obtain the confidence interval ${\mathrm{prob}(|F(\tilde{J}_D)-\gamma|\leq \epsilon_l)\geq 1- \rho}$ from Lemma~\ref{lem:ExactConfIntBeta}.
By setting $N=kn_2$ and $\epsilon_l=\epsilon$ in Lemma~\ref{lem:ExactConfIntBeta} we solve for $\sqrt{k}$ and obtain
\begin{equation}
	\begin{aligned}
		\sqrt{k}&=\frac{z_{\frac{\rho}{2}}\sqrt{\gamma-\gamma^2}}{2\epsilon\sqrt{n_2}}\\
	&\quad\pm\sqrt{\left(\frac{z_{\frac{\rho}{2}}\sqrt{\gamma-\gamma^2}}{2\epsilon\sqrt{n_2}}\right)^2+\frac{2(\gamma-0.5)z_{\frac{\rho}{2}}^2}{3n_2\epsilon}+\frac{1+\gamma}{3n_2\epsilon}}.
	\end{aligned}
\end{equation}
Since $\sqrt{k}\geq 0$, we discard the solution with a negative sign. After squaring and rounding up to the next larger integer, we obtain \eqref{eq:ExactKForConfInterval}. This concludes the proof.
\end{proof}
Note that it is not restrictive to only consider $\gamma\in[0.5,1)$ in Theorem~\ref{thm:SampleGuaranteesExact}.
This is because
\begin{align*}
\mathrm{prob}\lbrace|F(\tilde{J}_D)-\gamma|\leq \epsilon\rbrace&=\mathrm{prob}\lbrace|1-F(\tilde{J}_D)-(1-\gamma)|\leq \epsilon\rbrace\\
&=\mathrm{prob}\lbrace|F(\hat{J}_D)-(1-\gamma)|\leq \epsilon\rbrace,
\end{align*}
where $\hat{J}_D$ is the threshold that approximates a false alarm rate of $\gamma$, which exists due to the continuity of the CDF.
Thus, the integer $k$ obtained from \eqref{eq:ExactKForConfInterval} for a certain $\gamma\in[0.5,1)$ is the same as for $1-\gamma$.
\subsection{Discussion and comparison of the bounds}
\label{sec:GuaranteeComparison}
In this section we will discuss the three finite sample guarantees obtained previously, investigate their scaling in the parameters $\epsilon$ and $\rho$, and compare the sample sizes.  
For a given $\gamma$, $\epsilon$, and $\rho$, let $N_{\mathrm{DKW}}$, $N_{\mathrm{VP}}$, and $N_{\mathrm{beta}}$ be the sample sizes obtained from Proposition~\ref{prop:FiniteGuaranteesDKW}, Theorem~\ref{thm:FiniteGuaranteesVysv}, and Theorem~\ref{thm:SampleGuaranteesExact}, respectively.
First, we want to discuss how to choose $\epsilon$. 
A reasonable bound for $\epsilon$ is $\epsilon\leq\min(\gamma,1-\gamma)$, because this choice results in $[\gamma-\epsilon,\gamma+\epsilon]\subseteq[0,1]$. 
This means that the $\pm\epsilon$-band around the acceptable false alarm rate contains only reasonable false alarm rates, i.e., false alarm rates in $[0,1]$. 
Further, with that choice we have $\frac{4}{9}\gamma(1-\gamma)\geq \frac{4}{9}\min(\gamma,1-\gamma)^2\geq\frac{4}{9}\epsilon^2>\frac{1}{6}\epsilon^2\geq\rho\epsilon^2$ if $\rho\leq \frac{1}{6}$.
Hence, if $\rho\leq \frac{1}{6}$ the condition on both $\rho$ and $\epsilon$ in Theorem~\ref{thm:FiniteGuaranteesVysv} is fulfilled for this bound of $\epsilon$.

Next, we want to investigate the scaling of each guarantee in the parameters $\epsilon$ and $\rho$. 
All finite guarantees scale with $\epsilon^{-2}$, while they differ in their scaling in $\rho$. The guarantee scales with $\ln(\rho^{-1})$, $\rho^{-1}$ and $z_{\frac{\rho}{2}}$ when it is obtained from Proposition~\ref{prop:FiniteGuaranteesDKW}, Theorem~\ref{thm:FiniteGuaranteesVysv}, and Theorem~\ref{thm:SampleGuaranteesExact}, respectively. 
Typically, we desire high probability guarantees such that $\rho$ is close to zero.
It follows that Theorem~\ref{thm:SampleGuaranteesExact} has the best scaling for $\rho\in(0,0.5]$ since then $z_{\frac{\rho}{2}}<\ln(\rho^{-1})<\rho^{-1}$.

Now, we compare the sample sizes from the three different finite guarantees when $\rho=0.05$ and $\epsilon=0.01$ for $\gamma\in[0.01,0.99]$.
Fig.~\ref{fig:ComparisonOfSampleSize} shows the results of the comparison. 
\begin{figure}
\centering
\includegraphics[width=0.5\textwidth]{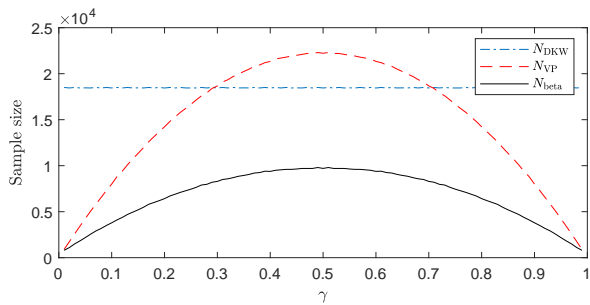}
\caption{We compare the sample sizes obtained from the DKW inequality (dash-dotted line), from the Vysochanskij-Petunin inequality (dashed line) and from the beta confidence intervals (solid line), where $\gamma\in[0.01,0.99]$, $\rho=0.05$, and $\epsilon=0.01$.}
\label{fig:ComparisonOfSampleSize}
\end{figure}
We see that the required sample size obtained from the DKW inequality is almost constant for different $\gamma$.
This is expected because the DKW inequality applies for the whole cumulative distribution function.
Further, we observe that Theorem~\ref{thm:SampleGuaranteesExact} produces the smallest sample sizes out of the three finite guarantees for all investigated $\gamma$.
We also observe that $N_{\mathrm{VP}}\leq N_{\mathrm{DKW}}$ if $\gamma\not\in(0.294,0.706)$. 
So only for $\gamma$ close to $0$ and $1$ the finite guarantees from Theorem~\ref{thm:FiniteGuaranteesVysv} perform better than the finite guarantees from Proposition~\ref{prop:FiniteGuaranteesDKW}.

Both Fig.~\ref{fig:ComparisonOfSampleSize} and the scaling in $\rho$ and $\epsilon$ show us that the sample sizes we obtain from Theorem~\ref{thm:SampleGuaranteesExact} are smaller than the sample sizes from Proposition~\ref{prop:FiniteGuaranteesDKW} and Theorem~\ref{thm:FiniteGuaranteesVysv} for all investigated $\gamma$.
Furthermore, according to \cite{ExactConfidenceIntBinomial}, the confidence bounds $\epsilon_u$ and $\epsilon_l$ are (nearly) exact if $N\geq 40$, since the influence of the higher order terms disappears. 
\section{Numerical examples}
\label{sec:NumExamples}
In the first part of section, we evaluate the sampling guarantees numerically for three different detector output distributions.
In the second part, we use detector output data from an experimental setup to tune the threshold of a cumulative sum (CUSUM) detector without resetting. The code to reproduce these results can be found at \url{https://github.com/DavidUmsonst/FiniteSampleGuaranteesForQuantileEstimation}.

\subsection{Evaluation of sample guarantees}
We begin by evaluating the finite sample guarantees. 
The idea is to approximate the threshold of the anomaly detector with \eqref{eq:ApproximatedThreshold} based on $N$ samples, $\lbrace y_{D,i}\rbrace_{i=1}^{N}$, of the detector output and we do this approximation $N_T=1000$ times to obtain $N_T$ different approximations $\tilde{J}_D$ of $J_D$.
Then we draw $10^6$ new samples of the detector output and calculate the empirical false alarm rate of each threshold.

Here, we investigate three different cases.
First, we assume that the detector output has a $\chi^2(4)$ distribution, where we have four degrees of freedom. The samples are then taken i.i.d. from a $\chi^2(4)$ distribution.
Second, we assume that the detector output has a L\'evy distribution and obtain i.i.d. samples for the output from a L\'evy distribution.
Third, we assume the samples are taken from the trajectory of a non-parametric CUSUM detector \cite{RuthsMultivariate} without resetting, given by
\begin{align}
	\label{eq:NonParamCUSUM}
	y_D(k+1)=\max(0,y_D(k)+\|r(k)\|_2^2-\delta),
\end{align}
where $y_D(0)=0$, $\delta=6$, and $r(k)$ is the input of the detector and is drawn i.i.d. from a four-dimensional, zero-mean, multivariate Gaussian distribution at each time step $k$.
Furthermore, we use $y_{D,i}=y_D(i)$ to obtain the samples, that is, the samples are the trajectory of the CUSUM detector.
Therefore, the i.i.d. assumption on the samples is not fulfilled in this case.

For our simulation, we choose $\gamma=0.95$, $\epsilon=0.01$, and $\rho=0.05$, which means that the empirical false alarm rate should be in the interval $[0.04, 0.06]$ with a probability of $95\,\%$.
Moreover, Section~\ref{sec:GuaranteeComparison} showed us that for these values of $\gamma$, $\epsilon$, and $\rho$, $N_{\mathrm{beta}}<N_{\mathrm{VP}}<N_{\mathrm{DKW}}$.
Therefore, we investigate only the smallest and largest sample sizes, $N_{\mathrm{beta}}$ and $N_{\mathrm{DKW}}$, respectively, for the sake of clarity.
In addition to that, we set $\beta=0$ when approximating the threshold with \eqref{eq:ApproximatedThreshold}.

\begin{figure}
\centering
\includegraphics[width=0.5\textwidth]{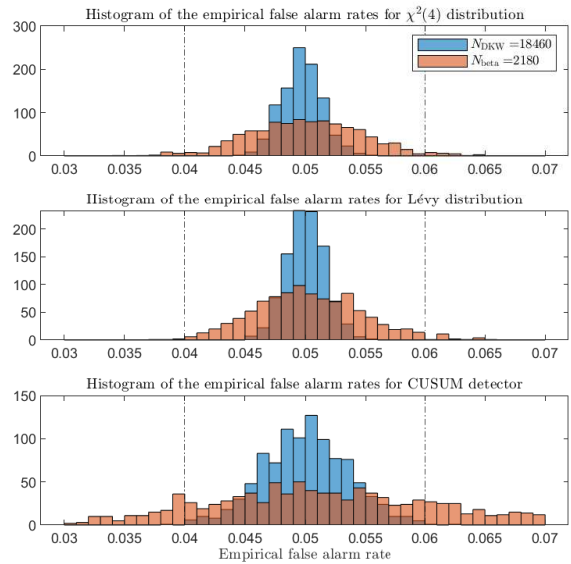}
\caption{The empirical false alarm rate from $N_T=1000$ thresholds is evaluated over a data set of $10^6$ samples for a $\chi^2(4)$ distribution (upper plot), a L\'evy distribution (center plot), and samples obtained from a CUSUM detector (lower plot).}
\label{fig:EmpFARForThreeDistributionsWithTwoSampleSizes}
\end{figure}

Fig.~\ref{fig:EmpFARForThreeDistributionsWithTwoSampleSizes} shows the histogram of the empirical false alarm rate for the $\chi^2(4)$ distribution (upper plot), the L\'evy distribution (center plot), and samples obtain from the CUSUM detector (lower plot) for each of the two sample sizes investigated.
The two upper plots confirm our theoretical results, where the empirical false alarm rate lies inside the desired interval with a probability larger than $95\,\%$. More specifically, when $N=N_{\mathrm{beta}}$ and $y_D$ has a $\chi^2$ distribution only $4.2\,\%$, that is, $42$ out of $1000$, of the empirical false alarm rates are outside of the desired interval.
Similarly, when $N=N_{\mathrm{beta}}$ and $y_D$ has a L\'evy distribution only $2.4\,\%$ of the empirical false alarm rates are outside of the desired interval.
If $N=N_{\mathrm{DKW}}$ none of the empirical false alarm rates are outside of the desired interval.
Therefore, we see that the sample size provided by Theorem~\ref{thm:SampleGuaranteesExact} are very close to the desired guarantees of at most $5\,\%$ of false alarm rates being outside the desired interval, while the sample size provided by Proposition~\ref{prop:FiniteGuaranteesDKW} has much better probabilistic guarantees but has also a conservative amount of samples.
Furthermore, the two upper plots in Fig.~\ref{fig:EmpFARForThreeDistributionsWithTwoSampleSizes} verify that the guarantees hold for both light-tailed and heavy-tailed distributions.

However, in the lower plot, where the samples are not i.i.d., we observe that $34.1\,\%$ of the empirical false alarm rates lie outside of the desired interval when $N=N_{\mathrm{beta}}$ is used.
For $N=N_{\mathrm{DKW}}$ only $0.3\,\%$ of the empirical false alarm rates lie outside the desired interval. 
Hence, we see that the probabilistic guarantees are not fulfilled, when $N=N_{\mathrm{beta}}$.
Although in this case the choice of $N=N_{\mathrm{DKW}}$ provides enough conservatism to outweigh the effect of the non-i.i.d. sampling, in principle there is no guarantee in either case.
Therefore, the sample size obtained in Theorem~\ref{thm:SampleGuaranteesExact} is the smallest sample size that fulfills the probabilistic guarantees given $N_{\mathrm{beta}}$ i.i.d. samples, but it is sensitive to violations of the i.i.d. assumption.

One reasonable approach to deal with correlated data is to collect more data than the minimum amount for i.i.d. data and randomly sample an approximately i.i.d. subset of the correlated data.
Proposition~\ref{prop:FiniteGuaranteesDKW}, Theorem~\ref{thm:FiniteGuaranteesVysv}, and Theorem~\ref{thm:SampleGuaranteesExact} can help with determining the size of the approximately i.i.d. subset of correlated data, but not the minimum size of the correlated data needed.
Determining the minimum size of the correlated data set for threshold estimation needs further investigation and is an avenue for future work.
This approach of dealing with correlated data is used in the next section, when we obtain samples from our experimental setup.

\subsection{Tuning with real data}
In this last section, we evaluate the sample guarantees provided by Theorem~\ref{thm:SampleGuaranteesExact} with data obtained from an experimental setup, where a real process is controlled.
The process used is the Temperature Control Lab (TCLab), which consists of two heaters and one temperature sensor for each heater (for more details see \cite{TCLabModelBenchmark}).
In the experimental setup we control both heaters to have a temperature of $40\,{}^\circ \mathrm{C}$. 
We implement an LQG controller, which is based on a linearized and discretized model around the steady-state temperature $40\,{}^\circ \mathrm{C}$ for both heaters.
Here we use a sampling time of $1\,\mathrm{s}$ to obtain measurements.
From the TCLab's sensors we obtain two measurements, $y(k)\in\mathbb{R}^2$, and with the predicted sensor measurements $\hat{y}(k)$ from the steady-state Kalman filter inside the LQG controller the residual signal is $\bar{r}(k)=y(k)-\hat{y}(k)$.
Here, $r(k)=\hat{\Sigma}^{-\frac{1}{2}}(\bar{r}(k)-\hat{\mu})$ serves as the input to \eqref{eq:NonParamCUSUM},
where $\hat{\mu}$ and $\hat{\Sigma}$ are approximations of the mean and the covariance matrix of the residual signal, respectively, used to normalize $\bar{r}(k)$, $\delta=3$ in \eqref{eq:NonParamCUSUM}, and the initial state of the CUSUM detector is again set to zero.
Note that, to the best of our knowledge, there exists no closed-form solution for the threshold of a CUSUM detector that guarantees an acceptable false alarm rate $1-\gamma$.
Therefore, we will use $N_{\mathrm{beta}}$ samples of the detector to approximate the threshold according to \eqref{eq:ApproximatedThreshold} with $\beta=0$ for an acceptable false alarm rate of $1-\gamma$.

We let the experiment run for $15379\,\mathrm{s}$ to gather data.
Since the detector output is assumed to be a random variable with a fixed distribution (see Assumption~\ref{assum:PDFofDetectorOutput}), we only work with samples from the steady state of the system, which is approximately reached for $k\geq 780$.
We obtain $\hat{\mu}$ and $\hat{\Sigma}$ from the first 1000 samples of the residual signal in steady state, i.e., the data set $\mathcal{D}_{\mu\Sigma}=\lbrace r(k)\rbrace_{k=780}^{1779}$.
With $\hat{\mu}$ and $\hat{\Sigma}$ available we determine the detector output $y_D(k+1)$ such that we use the data set $\mathcal{D}=\lbrace y_D(k)\rbrace_{k=1779}^{15379}$ of detector outputs to estimate the threshold, where $y_D(1779)=0$, and validate the empirical false alarm rate of the threshold estimate.

In the following, we apply the approach outlined at the end of the previous section.
This means that, for a given $\gamma$, we randomly choose $N_{\mathrm{beta}}$ samples from $\mathcal{D}$ without replacement to obtain an approximately i.i.d. data set for determining $\tilde{J}_D$. 
Since we only have a finite amount of samples in $\mathcal{D}$, the remaining samples of $\mathcal{D}$ are compared to the threshold to produce the empirical false alarm rate induced by $\tilde{J}_D$. 
Repeatedly applying this approach is known as Repeated Training/Test Splits \cite{AppliedPredictiveModelling}, which lets us simultaneously estimate and evaluate $\tilde{J}_D$ with a finite data set $\mathcal{D}$.
This is done repeatedly to obtain $N_T=10000$ empirical false alarm rates for a given $\gamma$.
For tuning the detector, we choose $\epsilon=0.01$ and $\rho=0.05$ and investigate nine different values for $\gamma$, that is, $\lbrace\gamma_i\rbrace_{i=1}^9$, where $\gamma_i=0.95+(i-1)\cdot 0.005$.
For these values of $\epsilon$ and $\rho$, the amount of samples for testing is at least $11421$ for each of the investigated values of $\gamma$.

Fig.~\ref{fig:BoxplotAccFARoverEmpFAR} shows a box plot of the empirical false alarm rates over the investigated acceptable false alarm rates $1-\gamma$.
We observe that the median value of the empirical false alarm rate is almost exactly located at the acceptable false alarm rate for all investigated values of $\gamma$.
Furthermore, the box plots are concentrated around the acceptable false alarm rate.
Only $5\,\%$ of the empirical false alarm rates should lie outside of the $\pm\epsilon$-band (shaded area in Fig.~\ref{fig:BoxplotAccFARoverEmpFAR}) around the acceptable false alarm rate, since $\rho=0.05$.
Here, the largest percentage of empirical false alarm rates outside the $\pm\epsilon$-band is $4.77\,\%$ for an acceptable false alarm rate of $0.045$, i.e., $\gamma=0.955$.
Hence, although the data in $\mathcal{D}$ is strongly correlated due the CUSUM detector dynamics \eqref{eq:NonParamCUSUM}, the theoretical guarantees from Theorem~\ref{thm:SampleGuaranteesExact} hold. 
The reason for that is that due to the random selection of $N_\mathrm{beta}$ samples from $\mathcal{D}$ to determine $\tilde{J}_D$, it is unlikely that several adjacent samples, e.g., $y_D(2000)$, $y_D(2001)$, and $y_D(2002)$, are chosen such that the samples in the training set are not highly correlated anymore.
\begin{figure}
\centering
\includegraphics[width=0.5\textwidth]{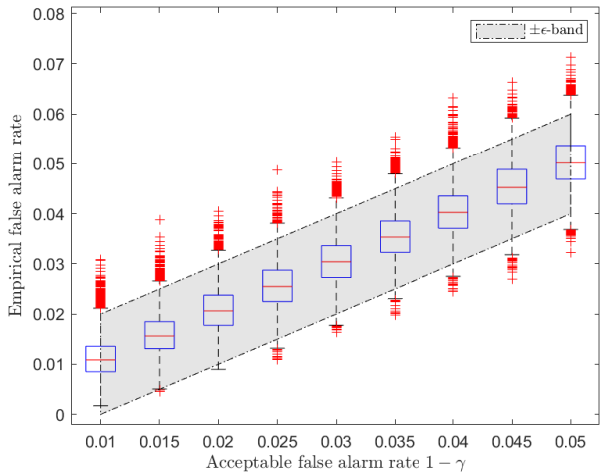}
\caption{A box plot of the empirical false alarm rate over the acceptable false alarm rate $1-\gamma$ when $N_T=10000$ threshold estimates with corresponding their empirical false alarm rates are obtained from the data set $\mathcal{D}$ via Repeated Training/Test Splits.}
\label{fig:BoxplotAccFARoverEmpFAR}
\end{figure}
Note that the samples in the test set are still strongly correlated. 
This shows that the guarantees for the threshold determined from the uncorrelated samples holds for the correlated samples in the test set as well.

Finally, we look at the histogram of the empirical false alarm rates for $\gamma=0.95$ (see Fig.~\ref{fig:EmpFARHistogramCUSUMRealData}).
We observe that in contrast to the lower plot in Fig.~\ref{fig:EmpFARForThreeDistributionsWithTwoSampleSizes} the histogram for $N=N_{\mathrm{beta}}$ is now concentrated around $1-\gamma=0.05$ and only $4.64\,\%$ of the empirical false alarm rates are located outside of the desired interval marked by the vertical dash-dotted lines.
This demonstrates that by selecting training data using random sampling (as opposed to sequential sampling) we are able to employ the sample guarantees in Theorem~\ref{thm:SampleGuaranteesExact} to find a threshold for the highly correlated output of the CUSUM detector.

\begin{figure}
\centering
\includegraphics[width=0.5\textwidth]{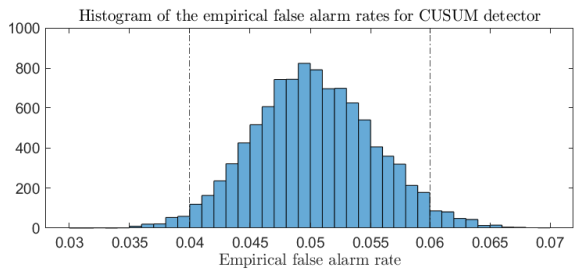}
\caption{The empirical false alarm rate from $N_T=10000$ thresholds is evaluated over a data set $\mathcal{D}$ when Repeated Training/Test Splits are used to determine $\tilde{J}_D$, where the area between the vertical dash-dotted lines represents the desired $\pm\epsilon$-band.}
\label{fig:EmpFARHistogramCUSUMRealData}
\end{figure}

\section{Conclusion}
\label{sec:Conclusion}

In this work, we considered the tuning of detector thresholds and pointed out the equivalence of the detector threshold and a specific quantile of the detector output distribution. 
We derived three different finite guarantees for the estimation of a quantile. 
The first is based on the DKW inequality, which takes the whole cumulative density function into account.
The second is based on the Vysochanskij-Petunin inequality and uses the expected value and variance of the CDF evaluated at a specific order statistic to determine the sample guarantees.
The third is based the confidence interval of a beta distribution and utilizes a closed-form solution of the confidence interval bounds.

When comparing the three guarantees, we saw that the third guarantee has the best scaling in the confidence parameter and leads to the smallest sample size.
Simulations showed that the i.i.d. assumption is important and can lead to violations of the guarantees. 
However, we showed in our experimental setup that using random instead of sequential samples to tune a threshold can be an effective way to avoid the adverse effects of a non-i.i.d detector output without changing the sequential implementation of the detector operation.

Avenues for future work involve relaxing the assumptions we have made in this work. We aim to develop extended results to directly take the non-i.i.d. nature of the detector output into account rather than to use the indirect random sampling approach we propose here. 
We would also like to be able to address non-stationary detector output distributions, including results that provide guidance on real-time threshold selection.

\bibliographystyle{IEEEtran}
\bibliography{IEEEexample}

\end{document}